\documentclass[12pt,reqno]{amsart}
\usepackage{dsfont, amssymb,amsmath,amscd,latexsym, amsthm, amsxtra,amsfonts}
\usepackage[all]{xy}
\usepackage{graphicx}
\usepackage{float}
\usepackage[active]{srcltx}
\textheight220mm \textwidth130mm \topmargin=-0.3cm
\oddsidemargin=1.4cm \evensidemargin=1.4cm

\newtheorem{theorem}{Theorem}[section]

\newtheorem{example}[theorem]{Example}

\newtheorem{Them}{Theorem}[section]

\begin{document}
\makeatletter
\def\@setauthors{%
\begingroup
\def\thanks{\protect\thanks@warning}%
\trivlist \centering\footnotesize \@topsep30\p@\relax
\advance\@topsep by -\baselineskip
\item\relax
\author@andify\authors
\def\\{\protect\linebreak}%
{\authors}%
\ifx\@empty\contribs \else ,\penalty-3 \space \@setcontribs
\@closetoccontribs \fi
\endtrivlist
\endgroup } \makeatother
 \baselineskip 17pt
\title[{{ OPTIMAL  DIVIDEND and reinsurance CONTROL }}]
 {{ Optimal Dividend and reinsurance strategy of a Property Insurance
 Company under Catastrophe Risk}}
\author[{\bf   Zongxia Liang, Lin He  and Jiaoling  Wu  } ]
{ Zongxia Liang \\ Department of Mathematical Sciences, Tsinghua
University, Beijing, China\\ Email: zliang@math.tsinghua.edu.cn\\
Lin He \\ The School of Finance, Renmin University of China,
Beijing, China \\Email: helin@ruc.edu.cn\\
Jiaoling  Wu\\
Department of Mathematical Sciences, Tsinghua University, Beijing,
China\\ Email: maths2005ling@hotmail.com } \maketitle
\begin{abstract}
We consider an optimal control problem of a property insurance
company with proportional reinsurance strategy. The insurance
business brings in catastrophe risk, such as earthquake and flood.
The catastrophe risk  could be partly reduced by reinsurance. The
management of the company controls the reinsurance rate and dividend
payments process to maximize the expected present value of the
dividends before bankruptcy. This is the first time to consider the
catastrophe risk in property insurance model, which is more
realistic.  We  establish the solution of the problem by the mixed
singular-regular control of jump diffusions. We first derive the
optimal retention ratio, the optimal dividend payments level, the
optimal return function and the optimal control strategy of the
property insurance company, then the impacts  of  the catastrophe
risk and key model parameters on the optimal return function and the
optimal control strategy of the company are discussed. \vskip 10 pt
 \noindent {\bf MSC}(2000): Primary 91B30, 91B70, 91B28; Secondary  60H10, 60H30.
 \vskip 10pt
 \noindent
 {\bf Keywords:} Optimal dividend  and reinsurance strategy; Optimal return
 function; Catastrophe risk; Jump diffusions; Regular-singular control.
\end{abstract}
 \vskip 10pt  \noindent
\setcounter{equation}{0}
\section{{ {\bf Introduction}}}
 \vskip 15pt \noindent
 In this paper we consider a property insurance company in which the
 dividend payments process and risk exposure are controlled by the management.
 The property insurance business brings in catastrophe risk, such as
 earthquake and flood.
 We assume that the company can only reduce its risk exposure by
 proportional insurance strategy for simplicity. The catastrophe
risk  could also be partly reduced by reinsurance. The regulation of
the catastrophe risk determines to what extent the catastrophe risk
could be eliminated, here we use reinsurance rate and adjusted risk
rate  in the regulation. We equate the value of the company to the
expected present value of the dividend payments before bankruptcy.
 \vskip 10pt \noindent
This is a mixed singular-regular control on diffusion models with
jumps. These optimization problems of  diffusion models for property
insurance companies that control their risk exposure by means of
dividend payments have attracted  significant interests recently. We
refer readers  to Radner and Shepp \cite{s18}, Paulsen and Gjessing
\cite{s15}, H{\o}jgaard and Taksar\cite{s3,s4} and
Asmussen\cite{s8}. Optimizing  dividend payments is a classical
problem  starting from the early work of Borch\cite{s10,s9},
Gerber\cite{s11}. For some applications of control theory in
insurance mathematics, see H{\o}jgaard and Taksar\cite{s12,s13},
Martin-l\"{o}f\cite{s14}, Asmussen and Taksar\cite{s7,s6} and  He
and Liang\cite{s200, s201, ime02}, Basse, Reddemann, Riegler and
Schulenburg\cite{BRJS}, Guo, Liu and Zhou \cite{s1} and other
author's work. Recent surveys can  be found in Taksar\cite{s16},
Avanzi \cite{AV}, Albrecher and Thonhauser\cite{AT}.
 \vskip 10pt\noindent
 Unfortunately, there is little work concerned with
 the catastrophe risk of the property insurance company in the problem of
 optimal risk control/dividend distribution via the reinsurance rate.
In the real financial market, the property insurance business
generally brings in catastrophe risk, such as earthquake and flood.
 The asset of the company evolves as a l\'{e}vy process
  with jump diffusions. Harrison
   and Taksar \cite{s101} provides a good idea to solve this
   kind of problems. Bernt {\O}ksendal and Agn\`{e}s
   Sulem\cite{s202} study the stochastic control problem of jump
   diffusions. Enlightened by these innovative
   ideas, we can solve effectively the optimal control problem of the
   company under catastrophe risk.
   Firstly, we establish the control problem of the L\'{e}vy processes
   with jump diffusions which is a realistic model of
   the property insurance company facing catastrophe risk. Then we work out
the solution of    singular-regular control of the jump diffusions,
that is, we establish the optimal return function, the optimal
reinsurance rate and  the optimal dividend strategy of the insurance
company. Finally we study the impacts of some key model parameters
on the optimal return function and the optimal dividend strategy.
 \vskip 10pt\noindent
The paper is organized as follows: In next section, we establish the
mathematical control model of the insurance company facing
catastrophe risk.  In section 3, we work out a solution of HJB
equations associated with  the singular-regular control on L\'{e}vy
processes with  jump diffusions.  In section 4, we establish the
solution of the optimal control problem, i.e., we derive the optimal
return function, the optimal reinsurance rate and the optimal
dividend strategy of the property insurance company. In section 5,
we use numerical calculations to discuss the influences of  the key
model parameters on the optimal retention ratio, the optimal
dividend payments level, the optimal return function and optimal
control strategy of the company. In section 6, we summarize main
results of this paper.
\vskip 15pt\noindent
\setcounter{equation}{0}
\section{{ {\bf Mathematical model with proportional reinsurance strategy
under catastrophe risk }}} \vskip 15pt\noindent In this paper, we
consider a property insurance company with proportional reinsurance
strategy. The property insurance business brings in catastrophe
risk, such as earthquake and flood. The catastrophe risk could only
be partly reduced by reinsurance. The company's management can
accommodate the profit and the risk by choosing dividend payments
process and  reinsurance rate. \vskip 8pt\noindent The asset of the
company evolves as the L\'{e}vy processes with jump diffusions. In
this model, if there is no dividend payments and only the
proportional reinsurance strategy is used to control the risk, then
the asset of the property insurance company is approximated by the
following processes(see {\O}ksendal and Sulem\cite{s202}),
\begin{eqnarray*}
dR_{t}=\mu a(t)dt+\sigma a(t)d{W}_{t}+ka(t)\int_\Re
z\widetilde{N}(dt,dz),
\end{eqnarray*}
where ${W}_{t}$ is a standard Brownian motion, $ \mu  $ is the
premium rate, and $\sigma^2 $ is the volatility rate, it is a normal
description of the property insurance company. $1-a(t)\in[0,1]$ is
the proportional reinsurance rate. $\widetilde{N}(dt,dz)=N(dt,
dz)-{I}_{\{|z|<R\}}\nu(dz)dt$ is the compensated Poisson random
measure of L\'{e}vy process $\{N_t\}$ with finite L\'{e}vy measure
$\nu$. The jump diffusions stand for the catastrophe risk produced
by earthquake and flood in the property insurance business. The
catastrophe risk could be partly reduced by reinsurance strategy.
Since the catastrophe risk is huge, the reinsurance strategy is not
the same as the normal reinsurance. Denote $k$ as the adjusted risk
rate according to the reinsurance regulation of the catastrophe
risk. $k$ is a constant. Throughout this paper we assume that $k\in
(0,\frac{\mu}{2\int_\Re z\nu(dz)} ]$, which ensures that the company
does not go into bankruptcy as soon as the catastrophe risk appears.
 \vskip 10pt\noindent
To give a mathematical foundation of the optimization problem, we
fixed a filtered probability space $(\Omega, \mathcal{F} , \mathcal
{F}_{t}, P)$, $\{{W}_{t}\}$ is a standard Brownian motion,
$\widetilde{N}(dt,dz)=N(dt, dz)-I_{\{|z|<R\}}\nu(dz)dt$ is also the
compensated Poisson random measure of L\'{e}vy process $\{N_t\}$
with finite L\'{e}vy measure $\nu$ on  this probability space.
 $\mathcal {F}_{t}$ represents the information available
at time $t$ and any decision is made based on this information. In
our model, we denote $L_{t}$ as the cumulative amount of dividend
payments from time $0$ to time $t$. We assume  that the dividend
payments  process $L_{t}$ is an $\mathcal {F}_{t}$ -adapted,
non-decreasing and right-continuous with left limits. \vskip
10pt\noindent
 A control strategy $ \pi $ is
described by a pair of $\mathcal {F}_{t}$ -adapted stochastic
processes $ \{ a_\pi, L^\pi\}$.
 A strategy $\pi =\{a_\pi (t),L_t
^\pi\} $ is called admissible if $0 \leq a_\pi (t)\leq 1 $ and $L_t
^\pi$ is a nonnegative, non-decreasing and right-continuous
function. We denote $\Pi$ the set of all admissible policies. When a
admissible strategy $\pi
 $ is applied,  we can rewrite  the asset of the
insurance company  by the following processes,
\begin{eqnarray*}
dR^\pi_{t}=\mu a_\pi(t)dt+\sigma a_\pi(t)d{W}_{t}+ka_\pi(t)\int_\Re
z\widetilde{N}(dt,dz)-dL^\pi_{t}, \quad R^\pi_{0}=x.
\end{eqnarray*}
In this case, we consider  transaction cost in the dividend
procedures. To simplify the problem, we consider the proportional
transaction cost, that is, if the company pays $l$, as dividend
payments, then the shareholders can get $\beta l,\beta <1$. The
company is considered bankruptcy as soon as its asset falls below
$0$. We define the bankrupt time as $\tau_\pi=\inf\{t\geq0:
R^\pi_{t}\leq 0\}$. $\tau_\pi$ is clearly an $\mathcal {F}_{t}$
-stopping time. \vskip 10pt\noindent The performance function
associated with each $\pi$
 is defined by
\begin{eqnarray}\label{e2.1}
J(s,x, \pi )&=&\mathbf{E}\big[\int_{0}^{\tau_\pi}e^{-c(s+t)}\beta
dL^\pi_{t}\big ],
\end{eqnarray}
and the optimal return function is
\begin{eqnarray}\label{e217}
 V(s,x)&=&\sup\limits_{\pi\in \Pi }\big\{ J(s, x,\pi)\big\},
\end{eqnarray}
where  $c$ denotes the discount rate. If a strategy $\pi^* $ is such
that $ J(s,x, \pi^* )= V(s,x)$, then we call $\pi^*  $,  $ a_{\pi^*}
(t)$ and $L^{\pi^*}_{t}$ the optimal dividend strategy, the optimal
retention ratio and the optimal dividend payments process,
respectively. This paper aims at working out the optimal strategy as
well as the optimal return function, and then discussing impacts of
key model parameters(e.g. $k$, $\nu$, $\mu$ and $\sigma^2$) on
$V(s,x)$, $ a_{\pi^*} (t)$ and $L^{\pi^*}_{t}$. \vskip 10pt\noindent
 \setcounter{equation}{0}
\section{{{\bf The solution of HJB equations for(\ref{e2.1}) and (\ref{e217})  }} }
\vskip 10pt\noindent In order to solve the  optimal stochastic
control problem (\ref{e2.1}) and (\ref{e217}) of jump diffusions in
next section,  we establish a solution of HJB equation associated
with the control problem in this section. The main result of this
section is the following.
\begin{Them} Assume that the L\'{e}vy measure $\nu$  and the
adjusted risk rate  $k$ satisfy $0<  \nu( \Re)< +\infty$,
$0<\int_\Re z\nu(dz)< +\infty  $ and $ 0< k \leq
\frac{\mu}{2\int_\Re z\nu(dz) }$. Let $\phi(s, x)$ be the function
defined by
 $$ \phi(s, x)= e^{-cs}\psi (x)\ \mbox{and}  $$
\begin{eqnarray}\label{eq200}
\psi (x)=\left\{
\begin{array}{l l l}
\psi_1(x)=C_1x^{\gamma},\ 0\leq x\leq x_0,  \\
\psi_2(x)=C_3e^{d_-x}+C_4e^{d_+x}, \ \ x_0 \leq x\leq x^* ,\\
\psi_3(x)=\beta (x-x^*)+\psi_2(x^*), \ \   x\geq x^*,
\end{array}\right.
\end{eqnarray}
 where $x_0
 =\frac{(1-\gamma)\sigma^2}{\mu}$,
  $\gamma$, $d_-$ and $d_+$ are solutions of (\ref{e310}) and
(\ref{e315}) below with
\begin{eqnarray}\label{relat}
\frac{x_0}{\gamma}+\frac{1}{|d_-|}-\frac{1}{d_+}<0.
\end{eqnarray}
 $x^*$, $C_1$, $C_2$ and $C_3$ are determined by  (\ref{x^*}), (\ref{C1}),( \ref{C3})
and (\ref{C4}) below, respectively. Then $\phi(s, x)\in C^2$ and is
a solution of the following HJB equation
\begin{eqnarray}\label{e32}
\max\big \{-\frac{\partial\phi}{\partial x}(s,x)+\beta e^{-cs},
\max\limits_{a\in[0,1]}\{\mathcal {A}\phi\} \big\}=0,
\end{eqnarray}
where
\begin{eqnarray*}
\mathcal {A}\phi&=&\frac{\partial\phi}{\partial
s}+\frac{\partial\phi}{\partial
x}a\mu+\frac{1}{2}a^2\sigma^2\frac{\partial^2\phi}{\partial x^2}
+\int_{\Re}\big\{\phi(s,x+akz)-\phi(s,x)\nonumber\\&-&akz\frac{\partial\phi}{\partial
x}(s,x)\big\}\nu(dz).
\end{eqnarray*}
\end{Them}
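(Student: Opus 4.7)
The plan is to verify directly that the piecewise candidate $\phi(s,x)=e^{-cs}\psi(x)$ is a $C^2$ solution of the HJB equation \eqref{e32}, treating the three regions $(0,x_0]$, $[x_0,x^*]$ and $[x^*,\infty)$ separately. Because of the separable ansatz, the $s$-dependence factors out: writing $\mathcal{L}^a\psi(x) = -c\psi(x) + a\mu\psi'(x) + \tfrac{1}{2}a^2\sigma^2\psi''(x) + \int_\Re[\psi(x+akz)-\psi(x)-akz\psi'(x)]\nu(dz)$, the equation reduces to $\max\{\beta-\psi'(x),\,\max_{a\in[0,1]}\mathcal{L}^a\psi(x)\}=0$. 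In each region one argument of the outer maximum is forced to vanish by construction, while the other must be shown non-positive.

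On $[x^*,\infty)$ the dividend region is characterized by $\psi_3'\equiv\beta$, which kills the first argument; what is left is to bound $\max_{a\in[0,1]}\mathcal{L}^a\psi_3\le 0$, and since $\psi_3$ is affine this reduces to a concave quadratic in $a$ whose sign can be controlled using the assumption $k\le\mu/(2\int z\,\nu(dz))$. On $[x_0,x^*]$ the constraint $a\le 1$ is active, and freezing $a=1$ turns $\mathcal{L}^1\psi_2=0$ into a linear integro-differential equation whose exponential ansatz gives a characteristic equation with two real roots $d_\pm$. On $(0,x_0]$ the control is interior; differentiating $\mathcal{L}^a\psi_1$ in $a$ together with the scale invariance of $\psi_1(x)=C_1x^\gamma$ forces the feedback form $a^*(x)=x/x_0$ with $x_0=(1-\gamma)\sigma^2/\mu$, so that $a^*(x_0)=1$ matches the next region, and substituting back produces the characteristic equation that determines $\gamma\in(0,1)$. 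The coefficients $C_1,C_3,C_4$ and the threshold $x^*$ are then pinned down by $C^2$ smooth pasting at $x_0$ together with the smooth-fit conditions $\psi_2'(x^*)=\beta$ and $\psi_2''(x^*)=0$ at $x^*$; the inequality \eqref{relat} is exactly what makes this algebraic system solvable with $x^*>x_0$ and the resulting constants of the correct sign.

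What remains is verification of the two HJB inequalities: concavity of $\psi$ on $(0,x^*]$ (from $0<\gamma<1$ and the signs $d_-<0<d_+$) combined with $\psi'(x^*)=\beta$ yields $\psi'\ge\beta$ there, and checking the second-order condition at the feedback $a^*$ confirms maximality within $[0,1]$ in each region. The chief obstacle is the non-local integral term, which couples the three pieces of $\psi$: for $x$ near a free boundary and $|z|$ not small, $x+akz$ lies in a different region, so $\psi(x+akz)$ must be evaluated piecewise according to which subinterval of $\Re$ the variable $z$ falls in. The integral therefore splits at the thresholds $(x_0-x)/(ak)$ and $(x^*-x)/(ak)$ and has to be assembled explicitly. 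In the reinsurance region the identity $\psi_1(x+a^*(x)kz)=\psi_1(x)(1+kz/x_0)^\gamma$, valid where the jump stays inside $(0,x_0]$, is what lets the characteristic equation for $\gamma$ close; bookkeeping the remaining portions of the integral, and using $\nu(\Re)<\infty$ together with the upper bound on $k$ to guarantee absolute convergence and control signs, is the delicate part of the argument.
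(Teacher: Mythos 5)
Your overall route is the same as the paper's: a guess-and-verify construction with the identical three-piece ansatz ($C_1x^{\gamma}$ on $[0,x_0]$, a combination of $e^{d_{\pm}x}$ on $[x_0,x^*]$, affine with slope $\beta$ beyond $x^*$), the interior maximizer $a^*(x)=x/x_0$ obtained by differentiating in $a$, $C^2$ pasting at $x^*$ (i.e.\ $\psi_2'(x^*)=\beta$, $\psi_2''(x^*)=0$) and $C^1$ pasting at $x_0$, with (\ref{relat}) guaranteeing $x^*>x_0$ and the signs of $C_1$, $C_3$, $C_4$; and the same two verification inequalities, concavity giving $\psi'\geq\beta$ on $[0,x^*]$ and monotonicity in $a$ giving $\max_{a\in[0,1]}\mathcal{L}^a\psi_2\leq\mathcal{L}^1\psi_2=0$ on $[x_0,x^*]$.

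The one genuine divergence is your treatment of the non-local term, and it is also where your plan stalls. You propose to evaluate $\psi(x+akz)$ piecewise, splitting the $z$-integral at $(x_0-x)/(ak)$ and $(x^*-x)/(ak)$. The paper does not do this: in deriving (\ref{e310}) and (\ref{e315}) it substitutes the single analytic piece valid at $x$ into the integral over all of $\Re$ (equivalently, it extends $C_1x^{\gamma}$, resp.\ $e^{dx}$, beyond its nominal interval), and it is precisely this convention that makes the characteristic equations collapse to the $x$-free forms $h(\gamma)=0$ and $l(d)=0$. If you carry out the splitting you describe, the power/exponential ansatz no longer reduces the integro-differential equation to an algebraic one --- the correction terms depend on $x$ --- so the constants $\gamma$, $d_-$, $d_+$ defined by (\ref{e310}) and (\ref{e315}) would not verify $\mathcal{L}^{a^*}\psi=0$ exactly for the genuinely piecewise $\psi$. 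You correctly flag this bookkeeping as ``the delicate part,'' but you supply no mechanism for closing it, and in the form you describe it cannot be closed: you would either have to adopt the paper's extension convention (and then separately justify the HJB inequalities with the true piecewise $\psi$ inside the integral) or give up the closed-form ansatz. Two smaller discrepancies: on $[x^*,\infty)$ one has $\psi_3''\equiv 0$, so $\mathcal{L}^a\psi_3$ is affine in $a$ rather than a concave quadratic, and the paper closes the inequality there using $\mu\beta-c\psi_2(x^*)=0$ together with $x\geq x^*$, not the hypothesis $k\leq\mu/(2\int_{\Re}z\,\nu(dz))$; that hypothesis is used only to force $h(1_-)=+\infty$ so that (\ref{e310}) admits a root $\gamma\in(0,1)$.
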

\vskip 5pt\noindent
\begin{proof}  Define $D$ as
\begin{eqnarray*}
D=\{(s,x):-\frac{\partial\phi}{\partial x}(s,x)+\beta e^{-cs}<0\}.
\end{eqnarray*}
We guess that
\begin{eqnarray*}
D=\{(s,x):s\geq 0,\ \  0<x<x^*\}
\end{eqnarray*}
for some unidentified $x^*$.  \ \  Inside $D$, the $\phi$ satisfies
\begin{eqnarray}\label{e34}
\max\limits_{a\in[0,1]}\{\mathcal {A}\phi\}=0,
\end{eqnarray}
i.e.,
\begin{eqnarray}\label{e35}
\max\limits_{a\in[0,1]}\big\{\frac{\partial\phi}{\partial
s}&+&\frac{\partial\phi}{\partial
x}a\mu+\frac{1}{2}a^2\sigma^2\frac{\partial^2\phi}{\partial x^2}
+\int_{\Re}\{\phi(s,x+akz)-\phi(s,x)\nonumber\\&-&akz\frac{\partial\phi}{\partial
x}(s,x)\}\nu(dz)\big\}=0.\nonumber\\
\end{eqnarray}
Differentiating $\mathcal {A}\phi =0 $ w.r.t. $a$, we get
\begin{eqnarray}\label{e36}
\frac{\partial\phi}{\partial
x}\mu+a\sigma^2\frac{\partial^2\phi}{\partial x^2}=0.
\end{eqnarray}
The equation (\ref{e36}) implies that the maximizer of the
right-hand side of the equation (\ref{e35}), $a(x)$, is the
following
\begin{eqnarray}\label{e37}
a(x)=-\frac{\mu\frac{\partial\phi}{\partial
x}}{\sigma^2\frac{\partial^2\phi}{\partial x^2}}.
\end{eqnarray}
Putting the expression  (\ref{e37}) into the equation (\ref{e35}),
we derive
\begin{eqnarray}\label{e38}
\frac{\partial\phi}{\partial
s}&-&\frac{1}{2}\frac{\mu^2(\frac{\partial\phi}{\partial
x})^2}{\sigma^2\frac{\partial^2\phi}{\partial
x^2}}+\int_{\Re}\{\phi(s,x-\frac{\mu\frac{\partial\phi}{\partial
x}}{\sigma^2\frac{\partial^2\phi}{\partial
x^2}}kz)-\phi(s,x)\nonumber\\&+&\frac{\mu\frac{\partial\phi}{\partial
x}}{\sigma^2\frac{\partial^2\phi}{\partial
x^2}}kz\frac{\partial\phi}{\partial x}(s,x)\}\nu(dz)=0.\nonumber\\
\end{eqnarray}
Define $\phi=e^{-cs}\psi(x)$, then it is easy to see from
(\ref{e38}) that the function $\psi(x)  $ satisfies
\begin{eqnarray}\label{e39}
-c\psi-\frac{1}{2}\frac{\mu^2(\psi^{'})^2}{\sigma^2\psi^{''}}
+\int_{\Re}\{\psi(x-\frac{\mu\psi^{'}}{\sigma^2\psi''}kz)-
\psi+\frac{\mu(\psi^{'})^2}{\sigma^2\psi''}kz\}\nu(dz)=0.\nonumber\\
\end{eqnarray}
Because $ a(x)\in [0,1)$, $0\leq x \leq x_0 $ and $a(x)=1 $, $x\geq
x_0$ for some $x_0\geq 0 $, we guess that  $
\psi(x)=\psi_1(x):=C_1x^\gamma +C_2$, $0\leq x \leq x_0 $. Using
$\psi(0)=0$, we have $ \psi(x)=C_1x^\gamma$. Putting it into
(\ref{e39}), we derive the following equation
\begin{eqnarray}\label{e310}
-c-\frac{1}{2}\frac{\mu^2}{\sigma^2}\frac{\gamma}{\gamma-1}
+\int_\Re \{ ( 1-\frac{\mu}{\sigma^2}\frac{1}{\gamma-1}kz)^\gamma-1
+\frac
{\mu}{\sigma^2}\frac{\gamma}{\gamma-1}kz\}\nu(dz)=0.\nonumber\\
\end{eqnarray}
By the assumption of L\'{e}vy measure $\nu $  every term in the
(\ref{e310}) is well-defined. Let $h(\gamma)$ denote the left hand
side of the (\ref{e310}). Then by the assumption of $k$  we have
$h(1_-):=\lim\limits_{\gamma< 1, \gamma \rightarrow 1}\{h(\gamma)\}=
+\infty $ and $h(0)=-c <0 $. So there is at least a $ \gamma $ to
solve the equation (\ref{e310}). Thus $ \psi_1(x)=C_1 x^{\gamma}$
and $a(x)=\frac {\mu x}{\sigma^2(1-\gamma )}$ for $0\leq x\leq x_0
 =\frac{(1-\gamma)\sigma^2}{\mu}$ because of $ a(x)\in [0,1]$.
 \vskip 10pt\noindent
 If $x_0\leq x\leq x^* $, then $a(x)=1$ and the
(\ref{e35} ) becomes
\begin{eqnarray}\label{e313}
\frac{\partial\phi}{\partial s}+\frac{\partial\phi}{\partial
x}\mu+\frac{1}{2}\sigma^2\frac{\partial^2\phi}{\partial x^2}
+\int_{\Re}\{\phi(s,x+kz)-\phi(s,x)-kz\frac{\partial\phi}{\partial
x}(s,x)\}\nu(dz)=0.\nonumber\\
\end{eqnarray}
Define $\phi(x)=\phi_2(x):=e^{-cs}\psi_2(x)$ for $x_0\leq x \leq
x^*$, then we derive from the (\ref{e313}) that
\begin{eqnarray}\label{e314}
\frac{1}{2}\sigma^2\psi_2^{''}(x) +\mu \psi_{2}^{'}(x)-c\psi_2(x)
+\int_{\Re}\{\psi_2(x+kz)-\psi_2(x)-kz \psi_2^{'}(x) \}\nu(dz)=0.\nonumber\\
\end{eqnarray}
We guess that
$$\psi_2(x)=e^{dx}\  \mbox{ for some constant $d\in \Re $   }$$
and further get the equation
\begin{eqnarray}\label{e315}
l(d):=\frac{1}{2}\sigma^2d^2+\mu d -c +\int_\Re \big\{e^{kdz}-1-kdz
\big\}\nu(dz)=0.
\end{eqnarray}
Since $l(0)<0$ and
$\lim\limits_{d\rightarrow+\infty}l(d)=\lim\limits_{d\rightarrow-\infty}l(d)=+\infty$,
the equation(\ref{e315}) has two solutions $d_{-}$ and $d_{+}$ with
$d_{-}<0<d_{+}$, and so the $\psi_2(x)$ should have the following
form
\begin{eqnarray*}
\psi_2(x)=C_3e^{d_{-}x}+C_4e^{d_{+}x} \ \mbox{ for $ x_{0}\leq x\leq
x^{*}$}
\end{eqnarray*}
where $ C_3$ and $C_4$ are constants.
 \vskip 10pt\noindent
 For $x\geq x^{*}$, the solution $\phi=e^{-cs}\psi_3(x)$ and
\begin{eqnarray*}
\psi_3(x)=\beta(x-x^{*})+\psi_2(x^{*})\ \mbox{ for  $x\geq x^{*}$}.
\end{eqnarray*}
Since $\psi'$ and $\psi''$ are continuous at $x^*$,
\begin{eqnarray}\label{e202}
\psi_2^{'}(x^{*})=\psi_3^{'}(x^{*}),
\end{eqnarray}
\begin{eqnarray}\label{e203}
\psi_2^{''}(x^{*})=\psi_3^{''}(x^{*}).
\end{eqnarray}
So
\begin{eqnarray*}
C_3(x^*)d_{-}e^{d_{-}x^*}+C_4(x^*)d_{+}e^{d_{+}x^*}=\beta,\\
C_3(x^*)d^2_{-}e^{d_{-}x^*}+C_4(x^*)d^2_{+}e^{d_{+}x^*}=0.
\end{eqnarray*}
Solving the last two equations, we have
\begin{eqnarray}\label{C3}
C_3(x^*)&=&\frac{\beta d_{+}}{e^{d_{-}x^*}d_{-}(d_{+}-d_{-})}<0,\\
C_4(x^*)&=&\frac{\beta
d_{-}}{e^{d_{+}x^*}d_{+}(d_{-}-d_{+})}>0.\label{C4}
\end{eqnarray}
Also, since $\psi$ and $\psi^{'}$ are continuous at $x_0$,
\begin{eqnarray*}
\psi_1(x_0)=\psi_2(x_0),\\
\psi_1^{'}(x_0)=\psi_2^{'}(x_0),
\end{eqnarray*}
that is,
\begin{eqnarray}\label{e204}
C_1x_0^\gamma=C_3(x^*)e^{d_{-}x_0}+C_4(x^*)e^{d_{+}x_0},
\end{eqnarray}
\begin{eqnarray}\label{e205}
C_1\gamma
x_0^{\gamma-1}=C_3(x^*)d_{-}e^{d_{-}x_0}+C_4(x^*)d_{+}e^{d_{+}x_0}.
\end{eqnarray}
We deduce from  the equations (\ref{e204}) and (\ref{e205}) that
\begin{eqnarray*}\label{e206}
q(x^*)&:=&(\frac{x_0}{\gamma}-\frac{1}{d_{-}})\frac{\beta
d_{+}}{(d_{+}-d_{-})}e^{d_{-}(x_0-x^*)}\nonumber\\
&&-(\frac{x_0}{\gamma}-\frac{1}{d_{+}})\frac{\beta
d_{-}}{(d_{+}-d_{-})}e^{d_{+}(x_0-x^*)}=0.
\end{eqnarray*}
We claim that the $x^*$ satisfying the last equation does exist. In
fact, differentiating $q(x)$, we have
\begin{eqnarray*}
q^{'}(x)=-(\frac{x_0}{\gamma}d_{-}-1)\frac{\beta
d_{+}}{(d_{+}-d_{-})}e^{d_{-}(x_0-x)}-(\frac{x_0}{\gamma}d_{+}-1)\frac{\beta
d_{-}}{(d_{-}-d_{+})}e^{d_{+}(x_0-x)}\nonumber\\
=-\beta\{\frac{x_0d_{+}d_{-}}
{\gamma(d_{+}-d_{-})}(e^{d_{-}(x_0-x)}-e^{d_{+}(x_0-x)})
+\frac{d_{-}e^{d_{+}(x_0-x)}-d_{+}e^{d_{-}(x_0-x)}}{d_{+}-d_{-}}\}>0
\end{eqnarray*}
for $x>x_0$. So $q(x)$ is an increasing function of $x$ and reaches
its minimum at $x_0$. Furthermore, by (\ref{relat}) we have
\begin{eqnarray}\label{e207}
q(x_0)=(\frac{x_0}{\gamma}-\frac{1}{d_{-}})\frac{\beta
d_{+}}{(d_{+}-d_{-})}-(\frac{x_0}{\gamma}-\frac{1}{d_{+}})\frac{\beta
d_{-}}{(d_{+}-d_{-})}<0.
\end{eqnarray}
Also $\lim\limits_{x\rightarrow+\infty}q(x)=+\infty$. Thus there
exists an $x^{*}(>x_0)$ satisfying $q(x^*)=0$. Solving the equation
$q(x^*)=0$, we get
\begin{eqnarray}\label{x^*}
x^*=x_0-\frac{1}{d_+-d_-}\ln \big\{
\frac{d_+^2(d_-x_0-\gamma)}{d^2_-(d_+x_0-\gamma )}\big\}.
\end{eqnarray}
Clearly, (\ref{relat}) implies that $0<
\frac{d_+^2(d_-x_0-\gamma)}{d^2_-(d_+x_0-\gamma )}<1, $  so $x^*>
x_0$. Moreover,
\begin{eqnarray}\label{C1}
C_1(x^*)=\frac{\beta d_{+}}{x_0^\gamma
e^{d_{-}x^*}d_{-}(d_{+}-d_{-})}e^{d_{-}x_0}+\frac{\beta
d_{-}}{x_0^\gamma
e^{d_{+}x^*}d_{+}(d_{-}-d_{+})}e^{d_{+}x_0}>0.\nonumber\\
\end{eqnarray}
Therefore the function $\phi(s, x) $ defined by the (\ref{e32})
should be the following form
 $$ \phi(s, x)= e^{-cs}\psi (x) \ \mbox{ and}  $$
\begin{eqnarray}\label{eq316}
\psi (x)=\left\{
\begin{array}{l l l}
\psi_1(x)=C_1(x^*)x^{\gamma},\ 0\leq x\leq x_0,  \\
\psi_2(x)=C_3(x^*)e^{d_-x}+C_4(x^*)e^{d_+x}, \ \ x_0 \leq x\leq x^* ,\\
\psi_3(x)=\beta (x-x^*)+\psi_2(x^*), \ \   x\geq x^*,
\end{array}\right.
\end{eqnarray}
where $x_0
 =\frac{(1-\gamma)\sigma^2}{\mu}$.
  $x^*$, $\gamma$, $d_-$ and $d_+$ are solutions of (\ref{x^*}), (\ref{e310}) and
(\ref{e315}), and $C_1$, $C_2$ and $C_3$ are determined by
(\ref{C1}),(\ref{C3}) and (\ref{C4}), respectively.
 \vskip 10pt\noindent
The problem remained is to approve the following inequalities.\\ For
$0\leq x\leq x^*$,
\begin{eqnarray}\label{e212}
-\frac{\partial\phi}{\partial x}(s,x)+\beta e^{-cs}<0,
\end{eqnarray}
\begin{eqnarray}\label{e209}
\max\limits_{a\in[0,1]}\big\{\frac{\partial\phi}{\partial
s}&+&\frac{\partial\phi}{\partial
x}a\mu+\frac{1}{2}a^2\sigma^2\frac{\partial^2\phi}{\partial x^2}
+\int_{\Re}\{\phi(s,x+akz)-\phi(s,x)\nonumber\\&-&akz\frac{\partial\phi}{\partial
x}(s,x)\}\nu(dz)\big\}\leq0.
\end{eqnarray}
For $x\geq x^*$,
\begin{eqnarray}\label{e210}
&-&\frac{\partial\phi}{\partial x}(s,x)+\beta e^{-cs}=0,
\end{eqnarray}
\begin{eqnarray}\label{e211}
\max\limits_{a\in[0,1]}\big\{\frac{\partial\phi}{\partial
s}&+&\frac{\partial\phi}{\partial
x}a\mu+\frac{1}{2}a^2\sigma^2\frac{\partial^2\phi}{\partial x^2}
+\int_{\Re}\{\phi(s,x+akz)-\phi(s,x)\nonumber\\&-&akz\frac{\partial\phi}{\partial
x}(s,x)\}\nu(dz)\big\}\leq0.
\end{eqnarray}
Since
\begin{eqnarray*}
\phi_1^{''}(x)=e^{-cs}C_1\gamma(\gamma-1)x^{\gamma-2}<0,\\
\phi_2^{''}(x)=e^{-cs}\frac{\beta
d_{+}d_{-}}{d_{+}-d_{-}}(e^{d_{-}(x-x^*)}-e^{d_{+}(x-x^*)})<0
\end{eqnarray*}
for $\gamma<1$ and $x\leq x^*$, the inequality (\ref{e212}) is
trivial due to $\phi(x)\in C^2$ is a convex function, and the
inequality (\ref{e210}) is a direct consequence of
$\phi_3''(x)=\beta$ for $x\geq x^*$ . \vskip 10pt\noindent
 For $0\leq x\leq x_0$, by the expression  of $\phi$,
 $ \max\limits_{a\in[0,1]}\{\mathcal {A}\phi\}=0$ is obvious.
 \vskip 10pt\noindent
For $x_0\leq x\leq x^*$, the inequality (\ref{e209}) is equal to
\begin{eqnarray}\label{e203}
\max\limits_{a\in[0,1]}\{\frac{1}{2}a^2\sigma^2\psi_2^{''}(x)
&+&a\mu \psi_2^{'}(x)-c\psi_2(x)
+\int_{\Re}\{\psi_2(x+akz)-\psi_2(x)\nonumber\\&-&akz\psi_2^{'}(x)
\}\nu(dz)\}\leq0.
\end{eqnarray}
Denote the function in bracket $\{\cdot\}$ at the left side of the
inequality (\ref{e203}) as $ p(a)$, we will prove that $p(a)$ is an
increasing function of $a$.
\begin{eqnarray*}
p^{'}(a)&=&a\sigma^2\psi^{''}_2(x)+\mu\psi^{'}_2(x)
=a\sigma^2[C_3(d_{-})^2e^{d_{-}x}+C_4(d_{+})^2e^{d_{+}x}]
\nonumber\\&+&\mu[C_3d_{-}e^{d_{-}x}+C_4d_{+}e^{d_{+}x}]
=\frac{\beta
d_{+}d_{-}}{d_{+}-d_{-}}(e^{d_{-}(x-x^*)}-e^{d_{+}(x-x^*)})
\nonumber\\&+&\mu[C_3d_{-}e^{d_{-}x}+C_4d_{+}e^{d_{+}x}]\geq0.
\end{eqnarray*}
as $d_{-}<0$, $d_{+}>0$, $x\leq x^*$, $ C_3<0$, and $C_4>0$. Then
$p(a)\leq p(1)=0$ for $0\leq a\leq 1$. \vskip 10pt\noindent For
$x\geq x^*$, the inequality(\ref{e211}) is equal to
\begin{eqnarray*}
\max\limits_{a\in[0,1]}\frac{1}{2}a^2\sigma^2\psi_3^{''}(x)& +&a\mu
\psi_3^{'}(x)-c\psi_3(x)
+\int_{\Re}\{\psi_3(x+akz)-\psi_3(x)\nonumber\\&-&akz\psi_3^{'}(x)
\}\nu(dz) =a\mu\beta-c\beta(x-x^*)-c\psi_2(x^*)\nonumber\\ &\leq&
\mu\beta-c\psi_2(x^*)-c\beta(x-x^*)\leq0
\end{eqnarray*}
due to $x\geq x^*$  and $\mu\beta-c\psi_2(x^*)=0$. So we end the
proof.
 \end{proof}
 \vskip 1pt\noindent
\setcounter{equation}{0}
\section{{ {\bf The solution of the optimal control problem with jump diffusions}} }
\vskip 10pt\noindent We now give a verification theorem for singular
-regular control problem(\ref{e2.1}) and (\ref{e217}).  We first
prove  the following.
\begin{Them} \label{T41}
Let W(s, x) satisfy  the following HJB equation,
\begin{eqnarray}\label{e322}
&&\max\{-\frac{\partial W}{\partial x}(t,x)+\beta e^{-ct},
\max\limits_{a\in[0,1]}\{\mathcal {A} W(t,x)\} \}=0\\
&&  \ \mbox{ for
$t\geq 0$ and $ x\geq 0$,}\nonumber\\
&& W(t, 0)=0\ \mbox{ for any $t\geq 0$}.
\end{eqnarray}
Then $W(s,x)\geq J(s, x,\pi)$ for any admissible strategy $\pi $ and
$(s,x)\in \Re^2_+$.
\end{Them}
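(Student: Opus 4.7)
The plan is a standard verification argument for singular--regular control of jump diffusions, the key ingredient being It\^o's formula applied to $W(s+t,R^\pi_t)$ and then read off against the two inequalities encoded in the HJB equation.

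Fix an admissible $\pi=(a_\pi,L^\pi)\in\Pi$ and write $L^\pi_t=L^{\pi,c}_t+\sum_{0<u\le t}\Delta L^\pi_u$, where $L^{\pi,c}$ is the continuous part. For a localizing sequence of stopping times $T_n\uparrow\infty$ (chosen so that the Brownian and compensated Poisson stochastic integrals below are true martingales), apply the generalized It\^o formula for jump diffusions with a BV driver to $W(s+t,R^\pi_t)$ on $[0,\tau_\pi\wedge T_n]$. Writing $\theta=s+u$, this yields
\begin{eqnarray*}
W(s+\tau_\pi\wedge T_n, R^\pi_{\tau_\pi\wedge T_n})&=&W(s,x)+\int_0^{\tau_\pi\wedge T_n}\mathcal{A}^{a_\pi(u)}W(\theta,R^\pi_{u-})du\\
&&+\int_0^{\tau_\pi\wedge T_n}\sigma a_\pi(u)\frac{\partial W}{\partial x}(\theta,R^\pi_{u-})dW_u\\
&&+\int_0^{\tau_\pi\wedge T_n}\!\int_\Re\!\big[W(\theta,R^\pi_{u-}+ka_\pi(u)z)-W(\theta,R^\pi_{u-})\big]\widetilde N(du,dz)\\
&&-\int_0^{\tau_\pi\wedge T_n}\frac{\partial W}{\partial x}(\theta,R^\pi_{u-})dL^{\pi,c}_u\\
&&+\sum_{0<u\le\tau_\pi\wedge T_n,\,\Delta L^\pi_u>0}\big[W(\theta,R^\pi_{u})-W(\theta,R^\pi_{u-})\big],
\end{eqnarray*}
where $\mathcal{A}^a$ denotes $\mathcal{A}$ with the particular value $a$ inserted.

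Next I exploit the two HJB inequalities. From $\max_{a\in[0,1]}\mathcal{A}W\le 0$ we obtain $\mathcal{A}^{a_\pi(u)}W\le 0$ pointwise, so the first integral is nonpositive. From $-\partial_x W+\beta e^{-c\theta}\le 0$, i.e.\ $\partial_x W\ge\beta e^{-c\theta}$, we deduce
\begin{eqnarray*}
\int_0^{\tau_\pi\wedge T_n}\frac{\partial W}{\partial x}(\theta,R^\pi_{u-})dL^{\pi,c}_u&\ge&\int_0^{\tau_\pi\wedge T_n}\beta e^{-c\theta}dL^{\pi,c}_u,\\
-\big[W(\theta,R^\pi_{u-}-\Delta L^\pi_u)-W(\theta,R^\pi_{u-})\big]&=&\int_0^{\Delta L^\pi_u}\frac{\partial W}{\partial x}(\theta,R^\pi_{u-}-r)dr\;\ge\;\beta e^{-c\theta}\Delta L^\pi_u.
\end{eqnarray*}
Taking expectations kills the two stochastic integrals and, rearranging, gives
\begin{eqnarray*}
W(s,x)\;\ge\;\mathbf{E}\big[W(s+\tau_\pi\wedge T_n,R^\pi_{\tau_\pi\wedge T_n})\big]+\mathbf{E}\int_0^{\tau_\pi\wedge T_n}\beta e^{-c(s+u)}dL^\pi_u.
\end{eqnarray*}

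Finally let $n\to\infty$. On $\{\tau_\pi<\infty\}$ the state reaches $0$ at $\tau_\pi$ and the boundary condition $W(\cdot,0)=0$ kills the terminal term; on $\{\tau_\pi=\infty\}$ one uses the at-most-linear growth of $W$ in $x$ (obvious from $\psi_3(x)=\beta(x-x^*)+\psi_2(x^*)$) together with the exponential discount $e^{-c(s+T_n)}$ and a Gronwall-type estimate on $\mathbf{E}[R^\pi_{T_n}]$ for the jump-diffusion dynamics to conclude $\mathbf{E}[W(s+T_n,R^\pi_{T_n})]\to 0$. Since $W\ge 0$ in any case, monotone convergence on the $L^\pi$-integral yields
\begin{eqnarray*}
W(s,x)\;\ge\;\mathbf{E}\int_0^{\tau_\pi}\beta e^{-c(s+u)}dL^\pi_u\;=\;J(s,x,\pi),
\end{eqnarray*}
which is the claim.

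The main technical obstacle is neither of the HJB-based bounds (which are essentially read off), but the two integrability/analysis items used at the end: justifying that the Brownian and compensated Poisson integrals are true martingales (handled by the localizing sequence and the $L^2$-boundedness of $\partial_xW$ on $[0,x^*]$ together with the linear tail, combined with the finiteness of $\nu$), and proving $\mathbf{E}[W(s+T_n,R^\pi_{T_n})]\to 0$, which requires controlling the growth of $\mathbf{E}[R^\pi_{T_n}]$ against the exponential discount $e^{-cT_n}$ using the admissibility constraint $a_\pi\in[0,1]$ and the hypothesis $0<\int_\Re z\nu(dz)<\infty$.
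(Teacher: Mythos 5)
Your proposal is correct and follows essentially the same route as the paper's own proof: the generalized It\^o formula applied to $W(s+t,R^\pi_t)$, the two HJB inequalities to discard the generator term and to bound the continuous and jump parts of $dL^\pi$ from below by $\beta e^{-c(s+u)}dL^\pi_u$, then expectations and a passage to the limit using $W(\cdot,0)=0$ and $W\ge 0$. If anything you are more explicit than the paper about the localization of the stochastic integrals and the handling of the terminal term on $\{\tau_\pi=\infty\}$, which the paper dispatches with an ``it is easy to prove'' remark.
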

\begin{proof} For any  fixed  strategy $\pi$, let
$\Lambda=\{s:L_{s-}^{\pi}\neq L_{s}^{\pi}\}$,
$\hat{L}=\sum_{s\in\Lambda, s\leq t}(L_{s}^{\pi}-L_{s-}^{\pi})$ be
the discontinuous part of $L_{s}^{\pi}  $ and
$\tilde{L}_{t}^{\pi}=L_{t}^{\pi}-\hat{L}_{t}^{\pi}$ be the
continuous part of $L_{s}^{\pi}  $. Let $\tau_\pi$ be the first time
that the corresponding cash  flow $R^\pi_{t}$ defined by (2.2) hit
$(-\infty, 0)$. Then, by applying the generalized It\^{o} formula to
the stochastic process $ Y^\pi(t):=( s+t, R^\pi_t)^T$ and the
function $W(s,x)$, we have
\begin{eqnarray}\label{e213}
&&\mathbf{E} [W(s+t\wedge\tau_\pi,
R_{t\wedge\tau_\pi}^{\pi})]\nonumber\\&&= W(s,
x)+\mathbf{E}[\int_{0}^{t\wedge \tau_\pi}\mathcal
{A} W(s+u,R_{u}^{\pi})du\nonumber\\
&&-\int_{0}^{t\wedge \tau_\pi}\frac{\partial W(s+u,
R^\pi_{u})}{\partial x}dL^{(c)}_{u}+\sum\limits_{0<t_n\leq
t\wedge\tau_\pi}\Delta_LW(s+t_n,
R^\pi_{t_n})],\nonumber\\
\end{eqnarray}
where
\begin{eqnarray*}
\mathcal {A}W(s,x)&=&\frac{\partial W}{\partial
s}+a\mu\frac{\partial W}{\partial
x}+\frac{1}{2}a^2\sigma^2\frac{\partial^2W}{\partial x^2}
+\int_{\Re}\{W(s,x+akz)-W(s,x)\nonumber\\&-&akz\frac{\partial
W}{\partial x}(s,x)\}\nu(dz),
\end{eqnarray*}
\begin{eqnarray*}
&&\Delta_LW(s+t_n, R^\pi_{t_n}):= W(Y^\pi(t_n) )- W(Y^\pi(t^-_n)
+\Delta_NY^\pi(t_n) ),\\
&&\Delta_NY^\pi(t_n) ):=\big(0, ka_\pi(t_n)\int_\Re
z\widetilde{N}(\{t_n\},dz)\big ),\\
&&\{ t_k\} \mbox{ is the set of jumping times of $L$    }.
\end{eqnarray*}
 Using $\mathcal {A} W\leq 0$ in the equation (\ref{e213}), we see that
\begin{eqnarray}\label{e214}
&&\mathbf{E}[W(s+t\wedge \tau_\pi, R_{t\wedge \tau_\pi}^{\pi})]\leq
W(s,
x)\nonumber\\
&&-\mathbf{E}[\int_{0}^{t\wedge\tau_\pi}\frac{\partial W(s+u,
R^\pi_{u})}{\partial x}dL^{(c)}_{u}- \sum\limits_{s<t_n\leq
t\wedge\tau_\pi}\Delta_LW(Y^\pi_{t_n})].
\end{eqnarray}
By the mean value theorem we have
\begin{eqnarray*}
\Delta_L W(Y^\pi_{t_n})=-\frac{\partial W}{\partial
x}(\hat{Y}^{(n)}_{t_n})\Delta L(t_n),
\end{eqnarray*}
where $\hat{Y}^{(n)}_{t_n}$ is some point on the straight line
between $Y^\pi_{t_n}$ and $Y^\pi_{t_n^-}+\Delta_N(Y^\pi_{t_n}  )$.
Since $W^{'}(Y_{u}^{\pi})\geq \beta e^{-c(s+u)}$,
$$\Delta_L W(Y^\pi_{t_n})\leq
-\beta e^{-c(s+t_n)}(L_{t_n}^{\pi}-L_{t_n-}^{\pi}),$$ which,
together with the inequality (\ref{e214}), implies
\begin{eqnarray}\label{e215}
\mathbf{E} [W(s+t\wedge \tau_\pi, R_{t\wedge
\tau_\pi}^{\pi})]&+&\mathbf{E}\big\{\int_{0}^{ t\wedge\tau_\pi}\beta
e^{-c(s+u)}dL_{u}^{\pi}\big \}\leq W(s, x).\nonumber\\
\end{eqnarray}
By the definition of $\tau_\pi $, the boundary condition (4.2) and
$W^{'}(Y_{u}^{\pi})\geq \beta e^{-c(s+u)}$, it is easy to prove that
$\liminf\limits_{t\rightarrow\infty}W(Y_{t})I_{\{\tau_\pi=\infty\}}=0
$ and
\begin{eqnarray}\label{e216}
\liminf\limits_{t\rightarrow\infty}W(s+t\wedge \tau_\pi, R_{t\wedge
\tau_\pi}^{\pi})&=&W(s+\tau_\pi,0)I_{\{\tau_\pi<\infty\}}+
\liminf\limits_{t\rightarrow\infty}W(Y_{t})I_{\{\tau_\pi=\infty\}}\nonumber\\
&\geq& W(s+\tau_\pi,0)I_{\{\tau_\pi<\infty\}}= 0.
\end{eqnarray}
So, we deduce from the inequalities (\ref{e215}) and (\ref{e216})
that
\begin{eqnarray*}
J(s, x,\pi)=\mathbf{E}\big\{\int_{0}^{\tau_\pi} e^{-c(s+t)}\beta
dL_{t}^{\pi}\big\}\leq W(s,x),
\end{eqnarray*}
thus we complete the proof.
\end{proof}
 \vskip 10pt\noindent
  Let
\begin{eqnarray*}
a(x)=\left\{
\begin{array}{l l l}
\frac {\mu
x}{\sigma^2(1-\gamma )},\quad x< x_0,\\
 1,\qquad \quad  x\geq x_0
\end{array}
\right.
\end{eqnarray*}
where $x_{0}=\frac{(1-\gamma)\sigma^{2}}{\mu}$. We call $a(x)$ the
feedback control function of the control problem (\ref{e2.1}) and
(\ref{e217}). \vskip 5pt\noindent We can now state the main result
of this paper.
\begin{Them}\label{Them42}
Assume that (\ref{relat})holds, the L\'{e}vy measure $\nu$  and the
adjusted risk rate $k$ satisfy $0<  \nu( \Re)< +\infty$, $0<\int_\Re
z\nu(dz)< +\infty  $ and $ 0< k \leq \frac{\mu}{2\int_\Re z\nu(dz)
}$. Then the optimal return function and the optimal dividend
strategy of the control problem (\ref{e2.1}) and (\ref{e217}) are
$V(s,x)=\phi(s,x)=e^{-cs}\psi(x)$ and $\pi^{*}= ( a( R_{t}^{\pi^*}),
L^{\pi^*}_t) $, respectively, where $ ( R_{t}^{\pi^*}, L^{\pi^*}_t
)$ is uniquely determined by the following stochastic differential
equations with reflection,
\begin{eqnarray}\label{e42*}
\left\{
\begin{array}{l l l}
R_{t}^{\pi^*}=x+\int_{0}^{t}\mu
a(R_{s}^{\pi^*})ds+\int_{0}^{t}\sigma a(R_{s}^{\pi^*})d{W}_{s}+k\int_{0}^{t}\int_\Re a(R_{s}^{\pi^*}) z\widetilde{N}(ds,dz)\\-L_{t}^{\pi^*},\\
R_{t}^{\pi^*}\leq x^{*},\\
\int_{0}^{\infty}I_{\{t:R_{t}^{\pi^*}<x^{*}\}}(t)dL_{t}^{\pi^*}=0,
 \end{array}\right.
\end{eqnarray}
$ \psi(x)  $  is the function defined by (\ref{eq200}) and the
optimal dividend payments level $x^{*}$ is given  by (\ref{x^*}).
\end{Them}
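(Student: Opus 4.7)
The plan is to combine Theorem \ref{T41} (which, applied to $W=\phi$, already yields the upper bound $\phi(s,x)\ge V(s,x)$, since $\phi$ solves (\ref{e322}) by Theorem 3.1 and satisfies $\phi(s,0)=0$) with a direct verification that the candidate strategy $\pi^{*}=(a(R^{\pi^{*}}_{t}),L^{\pi^{*}}_{t})$ realises $\phi(s,x)$. The two inequalities together give $V(s,x)=\phi(s,x)$ and the optimality of $\pi^{*}$.

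First, I would show that the reflected system (\ref{e42*}) admits a unique strong solution. On $[0,x^{*}]$ the coefficients $\mu a(x)$, $\sigma a(x)$ and $ka(x)z$ are Lipschitz and bounded (using $\int_{\Re}z\,\nu(dz)<\infty$ and $\nu(\Re)<\infty$), so standard results on reflected jump-diffusions with a one-sided upper barrier at $x^{*}$, as in {\O}ksendal--Sulem \cite{s202}, yield a unique pair $(R^{\pi^{*}},L^{\pi^{*}})$ with $R^{\pi^{*}}\le x^{*}$ and $L^{\pi^{*}}$ nondecreasing, right-continuous, and growing only when $R^{\pi^{*}}_{t-}=x^{*}$ (with jumps of $L^{\pi^{*}}$ of exactly the size needed to absorb the overshoots caused by positive Poisson shocks). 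In particular $a(R^{\pi^{*}}_{t})\in[0,1]$, so $\pi^{*}\in\Pi$.

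Next, I would repeat the It\^{o} computation from the proof of Theorem \ref{T41} with $W=\phi$ along $Y^{\pi^{*}}(t)=(s+t,R^{\pi^{*}}_{t})$ and verify that every inequality there becomes an equality. The generator term $\mathcal{A}\phi(s+u,R^{\pi^{*}}_{u})$ vanishes because the feedback $a(x)$ is, by construction, the maximiser in $\max_{a\in[0,1]}\{\mathcal{A}\phi\}=0$ on $D$: indeed $a(x)=\mu x/[\sigma^{2}(1-\gamma)]$ realises (\ref{e37}) on $[0,x_{0})$, while $a(x)=1$ realises the maximum on $[x_{0},x^{*}]$ where $\phi_{2}$ solves (\ref{e313}). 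The continuous part $L^{\pi^{*},(c)}$ grows only on $\{R^{\pi^{*}}=x^{*}\}$, where $\partial_{x}\phi=\psi_{3}'(x^{*})e^{-cs}=\beta e^{-cs}$, so $\int_{0}^{t\wedge\tau_{\pi^{*}}}\partial_{x}\phi\,dL^{\pi^{*},(c)}_{u}=\beta\int_{0}^{t\wedge\tau_{\pi^{*}}}e^{-c(s+u)}dL^{\pi^{*},(c)}_{u}$. For the jumps of $L^{\pi^{*}}$, which occur only when a positive Poisson shock takes the state above $x^{*}$ and is then reflected back along the line $\psi_{3}(x)=\beta(x-x^{*})+\psi_{2}(x^{*})$ of slope $\beta$, the mean-value step gives $\Delta_{L}\phi(Y^{\pi^{*}}_{t_{n}})=-\beta e^{-c(s+t_{n})}\Delta L^{\pi^{*}}_{t_{n}}$ as a genuine equality. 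Letting $t\to\infty$ and using $R^{\pi^{*}}\in[0,x^{*}]$, so that $\phi(s+t\wedge\tau_{\pi^{*}},R^{\pi^{*}}_{t\wedge\tau_{\pi^{*}}})\le e^{-c(s+t\wedge\tau_{\pi^{*}})}\psi(x^{*})\to 0$ on $\{\tau_{\pi^{*}}=\infty\}$, together with $\phi(s+\tau_{\pi^{*}},0)=0$ on $\{\tau_{\pi^{*}}<\infty\}$, the dominated convergence theorem and the martingale property of the Brownian and compensated Poisson integrals then give $\phi(s,x)=\mathbf{E}\int_{0}^{\tau_{\pi^{*}}}e^{-c(s+u)}\beta\,dL^{\pi^{*}}_{u}=J(s,x,\pi^{*})$.

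The main obstacle I expect is the first step: producing the reflected jump-diffusion (\ref{e42*}) with the correct one-sided reflection at $x^{*}$ in the presence of positive Poisson jumps, and verifying that the reflecting process $L^{\pi^{*}}$ has jumps of exactly the right size to return $R^{\pi^{*}}$ to $[0,x^{*}]$ after each overshoot (so that the Skorokhod conditions in (\ref{e42*}) hold). A secondary but routine technical point is justifying that the stochastic integrals against $W$ and $\widetilde{N}$ arising from It\^{o}'s formula are true martingales (not merely local martingales), which follows from the boundedness of $a(\cdot)$ and of $\partial_{x}\phi$ on $[0,x^{*}]$ together with the finiteness assumptions on $\nu$.
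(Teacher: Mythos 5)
Your proposal follows essentially the same route as the paper: the upper bound $\phi\ge V$ comes from Theorem \ref{T41} applied to $W=\phi$, and the lower bound comes from redoing the It\^{o} computation along $\pi^{*}$ and checking that the generator term vanishes, the continuous dividend integral picks up slope $\beta$ at $x^{*}$, and the reflection jumps occur along the linear piece $\psi_{3}$ so the mean-value step is an exact equality. In fact you are somewhat more careful than the paper, which simply asserts well-posedness of the reflected system (\ref{e42*}) and passes to the limit $t\to\infty$ without separating the cases $\tau^{*}<\infty$ and $\tau^{*}=\infty$; your treatment of both points is correct and fills genuine gaps in the published argument.
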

\begin{proof} Since  the function  $\phi(s, x)$ satisfies the HJB
equations (\ref{e32}), it is not hard to see that $\phi(s, x)$ also
satisfies conditions  in Theorem \ref{T41}. So $\phi(s, x)\geq J(s,
x,\pi)$  for any $  \pi$,  i.e.,
\begin{eqnarray}\label{e48*}
 \phi(s,x)\geq V(s,x).
\end{eqnarray}
 \vskip
10pt\noindent Next, we will prove $V(s,x)=\phi(s,x)= J(s, x,\pi^*)$
corresponding to $\pi^{*}$. By applying the generalized It\^{o}
formula, noting that the construction of $\phi(s,x)$ and the last
two equations in (\ref{e42*}), we deduce from the
inequality(\ref{e212})and the equations (\ref{e322}) that $ \mathcal
{A}\phi(s+t, R_{t}^{\pi^{*}})=0 $  for any $t\geq 0$,
 $\int_{0}^{t\wedge\tau^*}\frac{\partial
\phi(Y^{\pi^{*}}_{u})}{\partial x}dL^{(c)}_{u}= \int_{0}^{
t\wedge\tau^*}\beta e^{-c(s+u)}dL_{u}^{(c)}    $ and
 $ \sum\limits_{s<t_n\leq\tau^*}
\Delta_L\phi(Y^{\pi^{*}}_{t_n})=-\sum\limits_{s<t_n\leq\tau^*}\frac{\partial
\phi}{\partial x}(s+t_n, x^*)\Delta L(t_n)=
-\sum\limits_{s<t_n\leq\tau^*}\beta e^{-c(s+t_n)}\Delta L(t_n) $,
where $\tau^*=\inf\{t\geq0: R^{\pi^*}_{t}<0\}$. So
\begin{eqnarray}\label{e49*}
\mathbf{E} [\phi(s+t\wedge \tau^*, R_{t\wedge \tau^*}^{\pi^*})]&=
&\phi(s, x)+\mathbf{E}[\int_{0}^{t\wedge\tau^*}\mathcal
{A} \phi(Y_{u}^{\pi^*})du\nonumber\\
&-&\int_{0}^{t\wedge\tau^*}\frac{\partial
\phi(Y^{\pi^{*}}_{u})}{\partial
x}dL^{(c)}_{u}+\sum\limits_{s<t_n\leq\tau^*}
\Delta_L\phi(Y^{\pi^{*}}_{t_n})]\nonumber\\
&=&\phi(s, x)-\mathbf{E}[\int_{0}^{ t\wedge\tau^*}\beta
e^{-c(s+u)}dL_{u}^{(c)} \nonumber\\
&&+\sum\limits_{s<t_n\leq\tau^*}\beta
e^{-c(s+t_n)}\Delta L(t_n)] \nonumber\\
&=&\phi(s, x)-\mathbf{E}[\int_{0}^{ t\wedge\tau^*}\beta
e^{-c(s+u)}dL_{u}^{\pi^{*}}].
\end{eqnarray}
 \vskip 10pt\noindent
Since $\lim\limits_{t\rightarrow\infty} \phi(s+t\wedge\tau^{*},
 R_{t\wedge\tau^{*}}^{\pi^{*}})=\lim\limits_{t\rightarrow\infty}
e^{-c(s+t\wedge\tau^{*})}
\psi(R_{t\wedge\tau^{*}}^{\pi^{*}})=e^{-c(s+\tau^{*})}\psi(R_{\tau^{*}}^{\pi^{*}})
=e^{-c(s+\tau^{*})}\psi(0) =0$, we see from the inequality
(\ref{e48*}) and the equation(\ref{e49*})
that
$$V(s,x)\leq \phi(s,x)= \lim\limits_{ t\rightarrow\infty} \mathbf{E}\big
[\int_{0}^{t\wedge \tau^{*}}e^{-cs}\beta_{1}dL^{\pi^{*}}_{s}\big
]=J(s, x,\pi^{*})\leq V(s,x).$$
 So $V(s,x)=\phi(s,x)=J(s, x,\pi^{*})$, that is, $\phi(s,x)    $
  is the optimal  return function,  $\pi^{*}$ is the  optimal dividend
  strategy and  $x^*$ is the optimal dividend payments level.
Thus the proof has been done.
\end{proof}
\vskip 20pt\noindent
 \setcounter{equation}{0}
\section{{ {\bf Numerical examples }} }
 \vskip 15pt\noindent
 In this section, based on Theorem \ref{Them42}, we present some
  numerical examples, together with the feedback control function
  $a(x)$ and the
comparison theorem for SDE,  to portray how the key model
parameters(e.g. $k$,  $\mu$,  $\sigma^2$ and $\nu$) impact on
$V(s,x)$ and the optimal control strategy $\pi^* $,  that is, $
a_{\pi^*} (t)$ and $L^{\pi^*}_{t}$, respectively.
 \begin{example}\label{ex51}
Let $\nu(dz)=e^{-z}I_{\{z\geq 0\}}(z)dz$. Figure \ref{2} below
explains that the adjusted risk rate will increase the optimal
dividend payments level $x^*(k )$, so to avoid bankruptcy the
company should decrease the times of dividend or increase $k$ if
possible, that is, the company needs to maintain the cash inside the
company to cover the catastrophe risk ,so it pays dividend at a
higher level. On the other hand, $L^{\pi^*}_{t}$ decreases with $k$
by (\ref{e42*}), $R_{t}^{\pi^*} $ increases with $k$, so we see that
$ a_{\pi^*} (t)$  also increases with $k$. In fact,  the catastrophe
risk business brings in more risk as well as more income, and the
higher asset level raises the risk sustainment of the company. It
could reduce its reinsurance level $( i.e.,  1-a_{\pi^*} (t) ) $
according with the optimal control strategy $ \pi^* $.
\begin{figure}[H]
\includegraphics[width=0.7\textwidth]{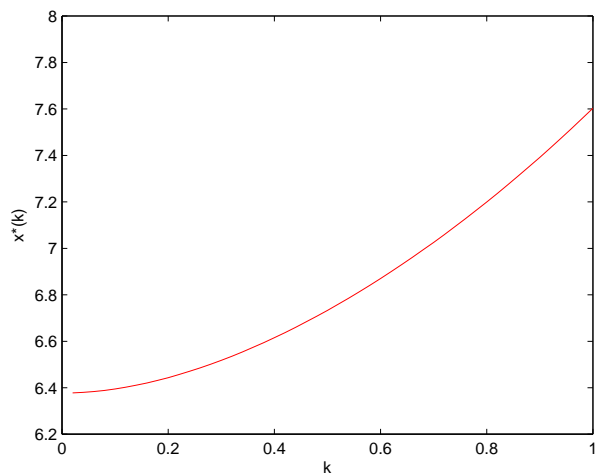}
\caption{ The optimal dividend payments level $x^*(k )$ as a
function of $k$.The parameter values are $\sigma^2=5$, $c=0.05$,
$\mu =2$, $s=0$.}\label{2}
\end{figure}
\end{example}
\begin{example}\label{ex52} Let $\nu(dz)=e^{-z}I_{\{z\geq 0\}}(z)dz$.
Figure \ref{1} below states that the property insurance company's
profit increases with the initial capital $x$ and the adjusted risk
rate $k$. So the property insurance company can get some return from
its catastrophe risk insurance business, but the return's increment
is small by adjusting $k$. However, the company can receive a good
public reputation by constant $k$, and interest from the catastrophe
insurance business.
\begin{figure}[H]
\includegraphics[width=0.8\textwidth]{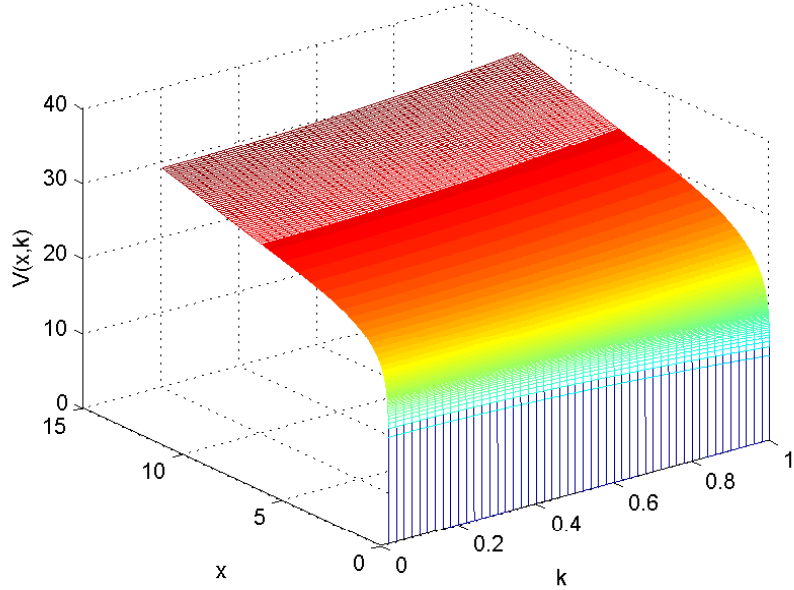}\vskip 10pt
\caption{ The optimal return function $V(x, k )$ as a function of $
x$ and $k$. The parameter values are $\sigma^2=5$, $c=0.05$,
$\beta=0.8$, $s=0$, $\mu=2$. }\label{1}
\end{figure}
\end{example}
\begin{example}\label{ex53}
Let $\nu(dz)=e^{-z}I_{\{z\geq 0\}}(z)$. Figure \ref{4} below
portrays that the optimal dividend payments level $x^*(\mu)$
decreases with the premium rate $\mu$, so $L^{\pi^*}_{t}$ increases
with $\mu$, but $ a_{\pi^*} (t)$ decreases with the premium rate.
These facts  mean that the higher growth rate of the insurance
company's asset raise the company's risk tolerance level and the
company could pay dividend at a lower level. Meanwhile, the company
should  adopt a higher reinsurance rate to avoid  bankruptcy due to
the lower dividend payments level.
\begin{figure}[H]
\includegraphics[width=0.7\textwidth]{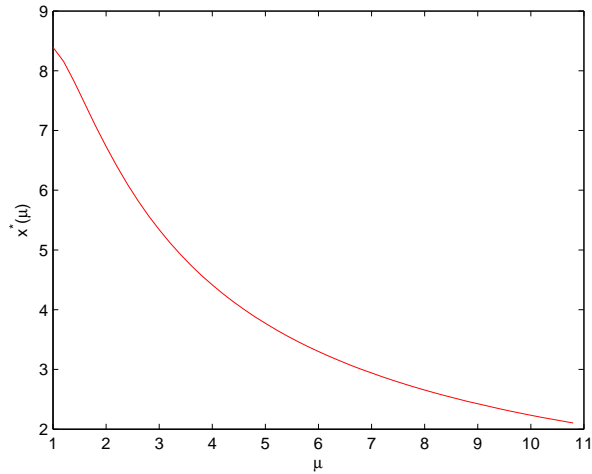}
\caption{ The optimal dividend payments level $x^*(\mu )$ as a
function of $\mu$. The parameter values are $\sigma^2=5$, $c=0.05$,
$s=0$, $k=0.5$.}\label{4}
\end{figure}
\end{example}
\vskip 10pt
\begin{example}\label{ex54}
Let $\nu(dz)=e^{-z}I_{\{z\geq 0\}}(z)dz$. Figure \ref{3} states that
the optimal return function $V(x, \mu )$ is an increasing function
of $\mu$, and high premium rate can notably increase the company's
return, that is, a higher growth rate of  the insurance company's
asset results in a higher return.
 \begin{figure}[H]
\includegraphics[width=0.7\textwidth]{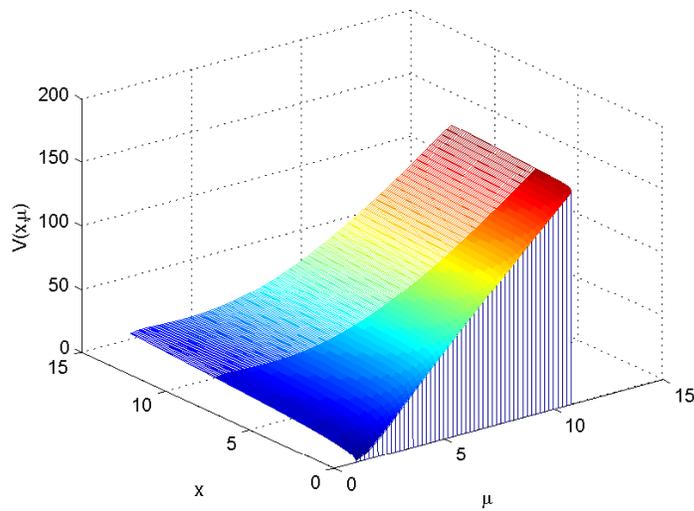}
\caption{ The optimal return function $V(x, \mu )$ as a function of
$ x$ and $\mu$. The parameter values are  $\sigma^2=5$, $c=0.05$,
$\beta=0.8$, $s=0$, $k=0.5$.}\label{3}
\end{figure}
\end{example}
\vskip 15pt\noindent
\begin{example}\label{ex55}
Let $\nu(dz)=e^{-z}I_{\{z\geq 0\}}(z)dz$. Figure \ref{6} below
portrays that the optimal dividend payments level $x^*(\sigma^2)$
increases with the risk volatility rate $\sigma^2$ of normal
insurance business, so $L^{\pi^*}_{t}$ decreases with $\sigma^2$,
but $ a_{\pi^*} (t)$ increases it. These mean that the higher
volatility make the insurance company's asset reduce the company's
risk tolerance level and the company prefer to maintain the cash
inside the company to cover the risk. Meanwhile, the company should
adopt a lower reinsurance rate to get lower optimal dividend
payments level.
\begin{figure}[H]
\includegraphics[width=0.7\textwidth]{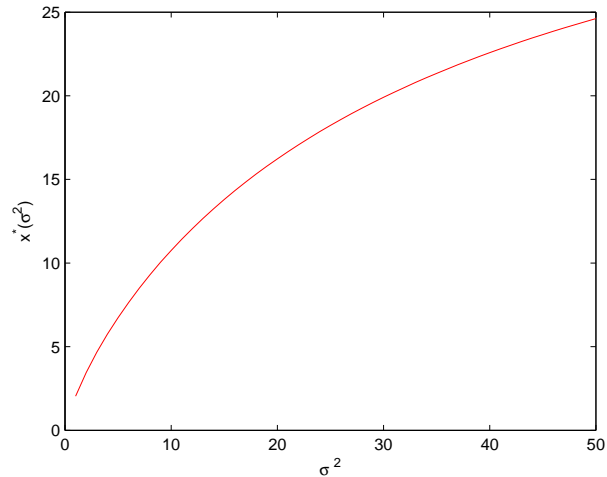}
\caption{ The optimal dividend payments level $x^*(\sigma^2)$ as a
function of $\sigma^2$. The parameter values are  $c=0.05$, $s=0$,
$k=0.5$, $\mu=2$.}\label{6}
\end{figure}
\end{example}
\begin{example}\label{ex56}
Let $\nu(dz)=e^{-z}I_{\{z\geq 0\}}(z)dz$. Figure \ref{5} below
states that the increment of the optimal return function $V(x,
\sigma^2 )$ due to $\sigma^2$ is very large, so higher risk can also
notably increase the company's return.
\end{example}
 \begin{figure}[H]
\includegraphics[width=0.7\textwidth]{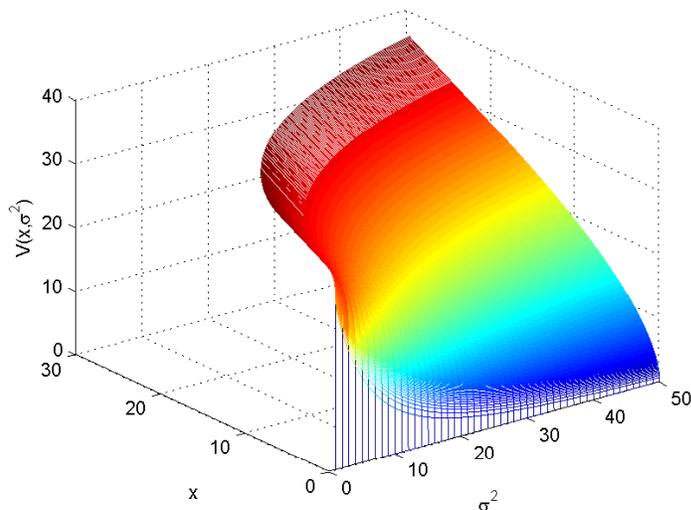}
\caption{ The optimal return function $V(x, \sigma^2 )$ as a
function of $ x$ and $\sigma^2$. The parameter values are $c=0.05$,
$\beta=0.8$, $s=0$, $k=0.5$, $\mu=2$.}\label{5}
\end{figure}
\begin{example}\label{ex57}
Let $\nu_t(dz)=e^{-tz}I_{\{z\geq 0\}}(z)dz(t\geq 1)$. Figure \ref{8}
below portrays that the optimal dividend payments level $x^*(t)$ has
obvious decrements on $[1,4]$, but on $[4, +\infty )$ the optimal
dividend payments level has no visibly changes, so $ a_{\pi^*}
(\cdot)$ and $L^{\pi^*}_{\cdot}$ change greatly for different
L\'{e}vy measures $\nu_t(dz)=e^{-tz}I_{\{z\geq 0\}}(z)dz$, $t\in
[1,4]$. However, they are almost same for different L\'{e}vy
measures $\nu_t(dz)=e^{-tz}I_{\{z\geq 0\}}(z)dz$, $t\geq 4$.
\begin{figure}[H]
\includegraphics[width=0.7\textwidth]{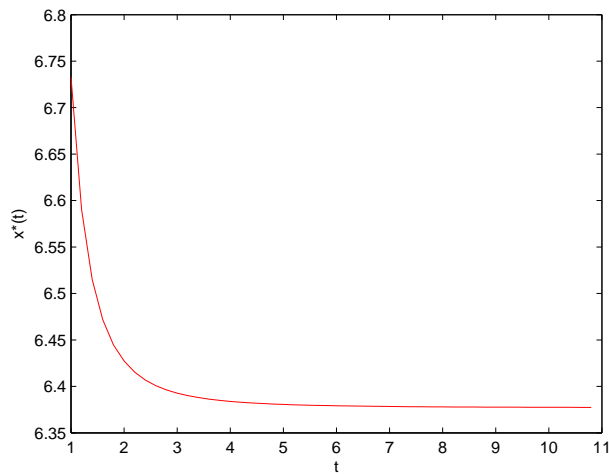}
\caption{ The optimal dividend payments level $x^*(t)$ as a function
of $t$. The parameter values are $\sigma^2=5$, $c=0.05$, $s=0$,
$k=0.5$, $\mu=2$.}\label{8}
\end{figure}
\end{example}
\begin{example}\label{ex58}
Let $\nu_t(dz)=e^{-tz}I_{\{z\geq 0\}}(z)dz$. Figure \ref{7} below
states that the change of the optimal return function $V(x, t )$ for
different $t$ is not distinct. So the optimal return function $V(x,
t )$ is nearly stable for different L\'{e}vy measures
$\nu_t(dz)=e^{-tz}I_{\{z\geq 0\}}(z)dz$ ($t\geq 0$).
\begin{figure}[H]
\includegraphics[width=0.7\textwidth]{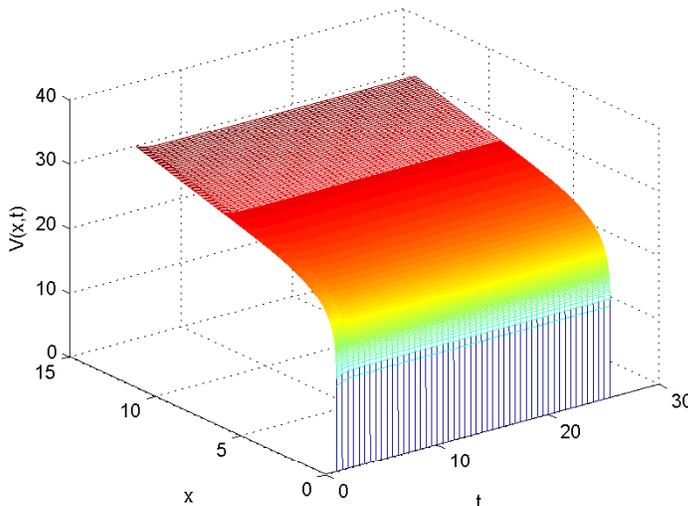}
\caption{ The optimal return function $V(x,t )$ as a function of $
x$ and $t$. The parameter values are $\sigma^2=5  $, $c=0.05$,
$\beta=0.8$, $s=0$, $k=0.5$, $\mu=2$.  )}\label{7}
\end{figure}
\end{example}
 \setcounter{equation}{0}
\section{{ {\bf Conclusion  }} }
\vskip 15pt\noindent
 We consider the optimal dividend and the reinsurance strategy of
 a property insurance company. The property insurance
business brings in catastrophe risk, such as earthquake and flood.
The catastrophe risk could be partly reduced by reinsurance. Due to
the huge risk, the company needs to add a  adjusted risk rate in the
regulation. The management of the company controls the reinsurance
rate and dividend payments to maximize the expected present value of
the dividends before bankruptcy. This is the first time to consider
the catastrophe risk in an insurance model, which is more realistic.
The catastrophe risk is modeled as the jump process in the
stochastic control problem. In order to find the solution of the
problem, we implore the mixed singular-regular control methods of
jump diffusions. We establish the optimal reinsurance rate, the
optimal dividend strategies and explicit the optimal return function
of the company. The influences of the catastrophe risk and the
reinsurance regulation of the catastrophe risk on the optimal
control strategy of the insurance company are also discussed. Based
on the main results we have just established, we present some
numerical examples to analyze in detail how the key model parameters
impact on the optimal retention ratio, the optimal dividend payments
strategies and the optimal return of the company. \vskip
40pt\noindent
 {\bf Acknowledgements.} \ \
This work is supported by Projects 10771114 and 11071136 of NSFC,
Project 20060003001 of SRFDP, the SRF for ROCS, SEM  and the Korea
Foundation for Advanced Studies.   We would like to thank the
institutions for the generous financial support. Special thanks also
go to the participants of the seminar stochastic analysis and
finance at Tsinghua University for their feedbacks and useful
conversations. Li He also thanks the generous financial support of
10XNF057 from Renmin University of China. We are also grateful to
Dr. Huan Fan and Yifeng Yin for useful discussions. \vskip
10pt\noindent
 \setcounter{equation}{0}

\end{document}